\newtheorem{claim}{Claim}
\begin{document}
\title{No Need for Recovery:\\ A Simple Two-Step Byzantine Consensus}
\author[1]{Tung-Wei Kuo\thanks{twkuo@cs.nccu.edu.tw}}
\author[2]{Kung Chen\thanks{chenk@nccu.edu.tw}}
\affil[1]{Department of Computer Science, National Chengchi University, Taiwan}
\affil[2]{Department of Management Information Systems, National Chengchi University, Taiwan}
\date{\vspace{-5ex}}
\maketitle

\begin{abstract} \normalsize
In this paper, we give a deterministic two-step Byzantine consensus protocol that 
achieves safety and liveness. A two-step Byzantine consensus protocol only needs two 
communication steps to commit in the absence of faults.
Most two-step Byzantine consensus protocols exploit optimism
and require a recovery protocol in the presence of faults. 
In this paper, we give a simple two-step Byzantine consensus protocol 
that does not need a recovery protocol. 
\end{abstract}
\section{Introduction}
We consider the Byzantine agreement problem. 
Let $n$ and $f$ be the number of nodes (e.g., processors or replicas)
and the number of faulty nodes, respectively.
In this problem, each node has an initial value,
and nodes exchange messages to reach an agreement. 
Specifically, we need to design a message exchange protocol (or consensus algorithm) 
so that after the protocol terminates, all the non-faulty nodes output (or commit) 
the same value, and this value is the initial value of some node.
In other words, the consensus algorithm must guarantee safety.
Moreover, the protocol must terminate eventually, i.e., guarantee liveness.
In this paper, we consider the partially synchronous model.
Specifically, let $D(t)$ be the transmission delay of a message sent at time $t$. 
In the partially synchronous model, $D(t)$ does not grow faster than $t$ indefinitely.

Our goal is to design a two-step consensus algorithm. 
In one step, a node can 1) send messages, 2) receive messages, 
and 3) do local computation, in that order~\cite{Bosco}. 
A consensus algorithm is two-step if all non-faulty nodes 
can commit after two steps in the absence of faults.
It has been shown that to solve the Byzantine agreement problem 
by a two-step consensus algorithm, $n \geq 5f+1$ must hold~\cite{FaB}. 
Thus, in this paper, we assume $n = 5f+1$.

Several two-step consensus algorithms have been proposed 
to solve the Byzantine agreement problem~\cite{FaB, Zyz, SBFT}. 
These solutions proceed in rounds, and a round consists of two steps in normal operation. 
However, it has been pointed out that 
FaB~\cite{FaB} and Zyzzyva~\cite{Zyz} cannot guarantee 
both safety and liveness~\cite{revisit}.
Moreover, these consensus algorithms exploit optimism in their design 
and invoke additional recovery protocols 
when normal operation fails~\cite{FaB, Zyz, SBFT}.
Thus, these solutions may need more than two steps in a round 
due to faulty behavior or long communication delay.

In this paper, we give a simple two-step consensus algorithm 
without the use of any recovery protocol. 
In our solution, a round always consists of only two steps: 
1) a leader, which is chosen in a round-robin fashion, broadcasts a proposal, 
and 2) all nodes vote and collect votes.
Our solution and analysis are inspired by MSig-BFT, 
which is a three-step protocol~\cite{MSig-BFT}.
Like MSig-BFT, the leader may not be allowed to broadcast a proposal 
if the network is in bad condition. 
An interesting property of our solution is that nodes may reach consensus in a round, 
even if the leader chosen in that round is faulty or suffers from long transmission delay.
Such a property can thus mitigate the harm caused by faulty nodes and transmission delay. 

\section{The Two-Step Consensus Algorithm}
\label{sec:algo}
\noindent \textbf{The first step: propose}. 
At the beginning of round $r$, the leader sends a Proposal message, 
which contains a candidate value $b$ and the current round $r$, to all nodes.
We will describe this step in detail after the next step is introduced.
For a Proposal message $p$, $p.R$ and $p.B$ denote the round 
in which $p$ is generated and the candidate value contained in $p$.
The pseudocode of the first step is given in Algorithm~\ref{algo1}.
\begin{algorithm}[t]\label{algo1}
\caption{Propose: from the viewpoint of node $u$}
    \KwIn{The initial value $b_{in}$ of $u$ and $lockset(r-1)$}
    $ldr \leftarrow$ The leader of round $r$\\    
    \If{$u = ldr$}{
        Construct a Proposal message $p$\\ 
        $p.R \leftarrow r$\\    
        \tcc{Determine $p.B$ and send $p$ to all nodes}     
        \If{$r = 1$}{
            $p.B \leftarrow b_{in}$ and send $p$ to all nodes\\
        }\ElseIf{$|lockset(r-1)| \geq 4f+1$}{
            \If{$\exists \text{ \upshape a value } b \text{ \upshape such that }
                 b \neq \varnothing \text{ \upshape and } 
                |\{v|v \in lockset(r-1), v.B = b\}| \geq 2f+1$}{
                Let $b$ be any value satisfying the above constraint\\                 
                $p.B \leftarrow b$ and send $p$ to all nodes\\
            }\Else{
                $p.B \leftarrow b_{in}$ and send $p$ to all nodes\\
            }             
        }    	
    }
\end{algorithm}

\noindent \textbf{The second step: vote}.
Once a node $u$ receives a valid Proposal message $p$, 
$u$ then broadcasts a Vote message containing candidate value $p.B$.
Note that a node broadcasts at most one Vote message in a round.
If $u$ receives $4f+1$ Vote messages before a predetermined timeout $TO_{commit}$ expires, 
and these $4f+1$ Vote messages contain the same non-empty candidate value $b$, 
$u$ then commits $b$. On the other hand, if $u$ cannot commit before $TO_{commit}$ 
expires, $u$ then goes to the next round, and $u$ needs to store the candidate value 
for which it votes. Specifically, let $b'$ be the candidate value contained in 
the Vote message broadcast by $u$ in round $r$.
In round $r+1$, if $u$ cannot receive a valid Proposal message 
before another predetermined timeout $TO_{vote}$ $(TO_{vote} < TO_{commit})$ expires, 
$u$ then broadcasts a Vote message containing $b'$. 
Note that in the first round (i.e., $r = 1$), 
if $u$ cannot receive a valid Proposal message before $TO_{vote}$ expires, 
$u$ broadcasts a Vote message containing an empty candidate value $\varnothing$. 
To achieve liveness in the partially synchronous model, 
whenever a node goes to the next round, the lengths of the two timeouts are doubled.
For a Vote message $v$, we use $v.R$ and $v.B$ to denote the round 
in which $v$ is generated and the candidate value contained in $v$.
We summarize this step from the viewpoint of node $u$ in Algorithm~\ref{algo3}.

\begin{algorithm}[t]\label{algo3}
\caption{Vote: from the viewpoint of node $u$}
    Construct a Vote message $v$, and set $v.R \leftarrow r$\\
    \tcc{Determine $v.B$ and broadcast $v$}
    \If{$u \text{ \upshape receives a valid Proposal message } p \text{ \upshape 
        of round } r$ $\text{\upshape before } TO_{vote} \text { \upshape expires}$}{
        $v.B \leftarrow p.B$ and broadcast $v$\\      	
    }\Else{
        \If{$r = 1$}{
            $v.B \leftarrow \varnothing$\\   
        }\Else{        
            $v' \leftarrow$ 
            the Vote message that $u$ broadcast in round $r-1$\\   
            $v.B \leftarrow v'.B$\\     
        }  
        Broadcast $v$\\     	
    }
    
    \If{$u \text{ \upshape collects } 4f+1 \text{ \upshape Vote messages of round }
    r \text{ \upshape containing the 
    same value } b \text{ \upshape and } b \neq \varnothing$ $\text{ \upshape 
    before }TO_{commit} \text{ \upshape expires}$}{
        Commit $b$\\
    }\Else{
        Go to round $r+1$\\    
    }
\end{algorithm}

\noindent \textbf{The complete description of the first step:}
Let $lockset(r)$ be the set of Vote messages of round $r$ received 
by node $u$. $lockset(r)$ is \textbf{valid} if it contains
$4f+1$ Vote messages of round $r$.
We now describe the first step from the viewpoint of $u$ in detail.
Let $r_c$ be the current round.
If $u$ is not the leader of round $r_c$, 
then $u$ goes to the second step, i.e., voting.
Otherwise, if $u$ is the leader, $u$'s action depends on whether $r_c = 1$ or $r_c > 1$.

\noindent \textbf{Case 1 $(r_c = 1)$:}
$u$ constructs a Proposal message $p$, where $p.B$ is the initial value of $u$ 
and $p.R = 1$.

\noindent \textbf{Case 2 $(r_c > 1)$:}
In this case, if $u$ wants to send a Proposal message $p$, 
$u$ must have a valid lockset of the previous round, 
i.e., $|lockset(r_c-1)| \geq 4f+1$. 
Otherwise, if $|lockset(r_c-1)| < 4f+1$, then there is no Proposal message in round $r_c$.
For other nodes to verify this condition,
the Proposal message must contain $lockset(r_c-1)$.
We further impose a constraint on $p.B$.
If at least $2f+1$ votes in $lockset(r_c-1)$ contain the same candidate value $b$ 
and $b \neq \varnothing$, then $p.B=b$. 
Note that if multiple candidate values
satisfy the constraint, then the leader can choose any one of them. 
Otherwise, if no candidate value satisfies the constraint, 
$u$ can propose its own initial value. 

\section{Analysis}\label{sec:analysis}
\subsection{Proof of Safety}
To prove that our solution guarantees safety, 
it suffices to prove the following two claims.
We say a node votes for a value $b$ in round $r$ if the node sends a Vote message 
containing $b$ in round $r$.
\begin{claim}
If two non-faulty nodes $u_1$ and $u_2$ commit values $b_1$ and $b_2$ in the same round 
$r$, respectively, then $b_1 = b_2$. 
\end{claim}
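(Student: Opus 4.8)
The plan is to run the standard quorum-intersection argument. First I would unfold the commit rule of Algorithm~\ref{algo3}: for $u_1$ to commit $b_1$ in round $r$, it must have collected $4f+1$ Vote messages of round $r$, each containing the same value $b_1 \neq \varnothing$. Let $Q_1$ be the set of nodes that sent these messages, so $|Q_1| = 4f+1$, and every node in $Q_1$ voted for $b_1$ in round $r$. Symmetrically, $u_2$ committing $b_2$ yields a set $Q_2$ with $|Q_2| = 4f+1$, each of whose members voted for $b_2$ in round $r$.

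Next I would use the fact that $n = 5f+1$ together with inclusion-exclusion to bound the overlap:
\[
|Q_1 \cap Q_2| \;\geq\; |Q_1| + |Q_2| - n \;=\; 2(4f+1) - (5f+1) \;=\; 3f+1 .
\]
Since at most $f$ nodes are faulty, $Q_1 \cap Q_2$ contains at least $(3f+1) - f = 2f+1 \geq 1$ non-faulty nodes; fix any such node $w$.

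Finally I would invoke the rule that a non-faulty node broadcasts at most one Vote message in a round. Hence the round-$r$ Vote message attributed to $w$ as a member of $Q_1$ and the one attributed to $w$ as a member of $Q_2$ are the same message. As a member of $Q_1$ that message carries $b_1$, and as a member of $Q_2$ it carries $b_2$, so $b_1 = b_2$, as claimed.

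I do not expect a genuine obstacle here; the argument is purely combinatorial. The one point that needs care is ruling out Byzantine interference: a faulty node could in principle report different votes to $u_1$ and to $u_2$, but because $|Q_1 \cap Q_2| \geq 3f+1 > f$ the overlap cannot be accounted for by faulty nodes alone, which is exactly what guarantees an honest witness $w$. This is precisely where the parameter regime $n \geq 5f+1$ — equivalently, two $(4f+1)$-quorums always meeting in more than $f$ nodes — is used.
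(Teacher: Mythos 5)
Your proof is correct and is essentially the paper's argument in contrapositive form: the paper counts that two disjoint sets of at least $3f+1$ non-faulty voters would exceed $n = 5f+1$, while you equivalently intersect the two $(4f+1)$-quorums to find a non-faulty node that would have had to vote for both values. Both rely on the same two facts --- the quorum size relative to $n$ and that a non-faulty node votes at most once per round --- so there is nothing substantive to distinguish them.
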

\begin{proof}
For the sake of contradiction, assume that $b_1 \neq b_2$.
$u_1$ (respectively, $u_2$) receives $4f+1$ Vote messages 
containing $b_1$ (respectively, $b_2$).
Hence, in round $r$, 
at least $3f+1$ non-faulty nodes vote for $b_1$ and a different set of at least 
$3f+1$ non-faulty nodes vote for $b_2$. Thus, there are at least $6f+2 > n$ nodes, 
which is a contradiction.
\end{proof}

\begin{claim}
Once a non-faulty node commits value $b$ in round $r$, 
for any future round $r' > r$, only $b$ can be committed in round $r'$.
\end{claim}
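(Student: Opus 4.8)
The plan is to prove the claim from a single invariant, established by induction on the round number. Call a non-faulty node a \emph{dissenter of round $s$} if it broadcasts a Vote message in round $s$ whose value is not $b$. The invariant is: \emph{for every round $s \ge r$ there are at most $f-1$ dissenters of round $s$.} Granting this, the claim follows at once: if some non-faulty node committed a value $c$ in a round $r' > r$, it had collected $4f+1$ round-$r'$ Vote messages all containing $c$, at least $4f+1-f = 3f+1$ of which come from non-faulty nodes; were $c \ne b$, those $3f+1$ nodes would all be dissenters of round $r'$, and $3f+1 > f-1$ contradicts the invariant, so $c = b$.

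The base case $s = r$ is easy: a non-faulty node committed $b$ in round $r$, so at least $3f+1$ of the $n-f = 4f$ non-faulty nodes voted for $b$ in round $r$; hence at most $4f - (3f+1) = f-1$ non-faulty nodes did anything else in round $r$ (dissent, or cast no vote because they had committed in an earlier round), and in particular at most $f-1$ of them are dissenters of round $r$.

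For the inductive step, assume the invariant for a round $s \ge r$ and consider round $s+1$ (so $s+1 \ge 2$). I first claim that \emph{every valid round-$(s+1)$ Proposal $p$ has $p.B = b$.} Such a $p$ carries a valid $lockset(s)$, that is, at least $4f+1$ round-$s$ Vote messages, which are authenticated and so cannot be forged by the leader (it can only choose which genuine votes to include). Of these votes, at most $f$ come from faulty nodes, and by the induction hypothesis at most $f-1$ of the non-faulty ones come from dissenters of round $s$; hence at least $(4f+1)-f-(f-1) = 2f+2 \ge 2f+1$ of the votes in $lockset(s)$ contain $b$, while at most $(4f+1)-(2f+2) = 2f-1 < 2f+1$ of them fail to contain $b$, so no value other than $b$ can have $2f+1$ supporting votes in $lockset(s)$. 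By the leader's proposal rule (the second case of the first step), a valid round-$(s+1)$ Proposal must then take as its value the unique value with at least $2f+1$ supporting votes in $lockset(s)$, namely $b$; this proves the claim about proposals. Now take any non-faulty node $u$ that votes in round $s+1$. By the voting rule, either $u$ adopts the value of a valid round-$(s+1)$ Proposal received before $TO_{vote}$, which we have just shown equals $b$, or $u$ re-casts the value it voted in round $s$ (the first-round default $\varnothing$ does not apply since $s+1 \ge 2$). Hence $u$ is a dissenter of round $s+1$ only if it was already a dissenter of round $s$, and the number of dissenters does not increase; the invariant holds for round $s+1$.

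I expect the inductive step, and specifically the quorum-intersection count inside $lockset(s)$, to be the main obstacle: one has to keep simultaneous track of the (up to) $f$ faulty votes, the (up to) $f-1$ honest dissenters, and the non-faulty nodes that committed in an earlier round and no longer vote, and then check that the surviving tally not only forces $\ge 2f+1$ votes for $b$ but also denies $\ge 2f+1$ votes to every competitor, so that the leader's rule is pinned to $b$ and to nothing else. Once that count is in hand, the re-voting rule of the second step propagates the restriction from round $s$ to round $s+1$ and the rest is routine.
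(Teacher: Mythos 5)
Your proof is correct and follows essentially the same route as the paper's: the paper tracks a fixed set $G_b$ of at least $3f+1$ non-faulty nodes that keep voting $b$ round after round, while you equivalently bound the complementary set of non-faulty dissenters, and both arguments hinge on the same lockset count showing that $b$ receives at least $2f+1$ votes and every rival at most $2f$, pinning the leader's proposal to $b$. One arithmetic slip: with $n=5f+1$ there are $n-f=4f+1$ non-faulty nodes, not $4f$, so the dissenter bound should be $f$ rather than $f-1$; the corrected count still gives at least $(4f+1)-f-f=2f+1$ lockset votes for $b$ and at most $2f$ for any competitor, so the argument goes through unchanged.
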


\begin{proof}
Let $N_b$ be the set of nodes that vote for $b$ in round $r$.
Because $b$ is committed in round $r$, $|N_b| \geq 4f+1$.
Let $G_b$ be the subset of $N_b$ that contains non-faulty nodes only.
Thus, $|G_b| \geq 3f+1$.
Observe that if the leader of round $r+1$ has a valid lockset $LS$ of round $r$, 
then $LS$ must contain at least $2f+1$ Vote messages sent from $G_b$.
In addition, for each candidate value $b' \neq b$, 
at most $2f$ Vote messages in $LS$ contain $b'$. 
Thus, if the leader of round $r+1$ can send a Proposal message $p$, 
$p.B=b$ must hold.
Otherwise, if there is no Proposal message of round $r+1$, 
all nodes in $G_b$ still vote for the value that they vote for in round $r$, 
i.e., $b$. In both cases, all nodes in $G_b$ still vote for $b$ in round $r+1$.
The claim then follows by induction.
\end{proof}

\subsection{Proof of Liveness Under the Partially Synchronous Model}
A standard technique to guarantee liveness under the partially synchronous model
is to double the lengths of the timeouts (e.g., $TO_{vote}$ and $TO_{commit}$) 
whenever entering a new round~\cite{PBFT}.
It can be shown that there is some round $r$ such that for any round $r' > r$, 
all non-faulty nodes can receive messages from each 
other before the timeouts expire~\cite{PBFT}. 
Thus, in some round $r' > r+1$, the leader is non-faulty\footnote{Recall that the leader 
is chosen in a round-robin fashion. Hence, in some round $r' \in \{r+2, r+3, \cdots, r+f+2\}$, 
the leader is non-faulty.} and has a valid lockset of round $r'-1$. 
Thus, there must be a valid Proposal message $p$ in round $r'$. 
All $4f+1$ non-faulty nodes then vote for $p.B$ in round $r'$.
Since all these $4f+1$ Vote messages can be received in time, 
all non-faulty nodes can commit $p.B$ in round $r'$.

\bibliographystyle{plain}

\end{document}